
\documentclass[letterpaper, 10 pt, conference]{ieeeconf}  

\IEEEoverridecommandlockouts                              

\overrideIEEEmargins                                      




\usepackage[T1]{fontenc}
\usepackage{xcolor}
\usepackage{cite}
\usepackage{amsmath,amssymb,amsfonts}

\usepackage{amsthm}
\usepackage{nameref}
\usepackage{bbm}
\usepackage{algorithmic}
\usepackage{graphicx}
\usepackage{textcomp}
\usepackage{dsfont}
\usepackage{float}
\usepackage{todonotes}

\usepackage{enumitem}

\usepackage{tikz} 
\usepackage{tikz-network}
\usetikzlibrary{arrows.meta,calc, angles, quotes}
\usepackage{subcaption}
\usepackage{standalone}
\usepackage{tikz}
\usepackage{tikz-network}
\usetikzlibrary{arrows,intersections,calc,positioning, angles, quotes}
\usetikzlibrary{decorations.markings}
\usetikzlibrary{shapes.misc}
\usetikzlibrary{patterns}
\usepackage{amsmath}
\usepackage{amsfonts}
\usepackage{cleveref}
\usepackage{mathtools}
\usepackage{upgreek}
\usepackage{float}
\usepackage{empheq}
\usepackage[theorems,skins]{tcolorbox}
\newtcolorbox{mymathbox}[1][]{colback=white, sharp corners, #1}
\usepackage[dvipsnames]{xcolor}

\newtheorem{definition}{Definition}

\newtheorem{theorem}{Theorem}

\newtheorem{proposition}{Proposition}

\newtheorem{example}{Example}

\newcommand{\continuation}{??}

\newtheorem{assumption}{Assumption}

\newcommand{\Null}{\mathrm{Null}}
\newcommand{\IM}{\mathrm{IM}}

\newcommand{\Aut}{\mathrm{Aut}}
\global\long\def\rr{\mathcal{R}}

\global\long\def\G{\mathcal{G}}
\global\long\def\E{\mathcal{E}}
\global\long\def\V{\mathcal{V}}

\global\long\def\R{\mathbb{R}}
\global\long\def\C{\mathcal{C}}

\title{\LARGE \bf
{Formation Control via Rotation Symmetry Constraints}}

\author{Zamir Martinez  and Daniel Zelazo, \IEEEmembership{Senior Member, IEEE}
\thanks{This work was supported by the Israel Science Foundation grant no. 453/24 and the Gordon Center for Systems Engineering.}
\thanks{Zamir Martinez and Daniel Zelazo are with the Stephen B. Klein Faculty of Aerospace Engineering, Technion – Israel Institute of Technology, Haifa 3200003, Israel (e-mails: z.m@campus.technion.ac.il, dzelazo@technion.ac.il.)}
}

\begin{document}

\maketitle
\thispagestyle{empty}
\pagestyle{empty}

\begin{abstract}

This work introduces a distributed formation control strategy for multi-agent systems based solely on rotation symmetry constraints. We propose a potential function that enforces inter-agent \textbf{rotational} symmetries, whose gradient defines a control law that drives the agents toward a desired planar symmetric configuration. We show that only $n-1$ edges (the minimal connectivity requirement) are sufficient to implement the strategy, where $n$ is the number of agents. We further augment the design to address the \textbf{maneuvering problem}, enabling the formation to undergo coordinated translations, rotations, and scaling along a predefined virtual trajectory. Simulation examples are provided to validate the effectiveness of the proposed method.
\end{abstract}

\section{INTRODUCTION}

The demand for distributed formation control schemes in multi-agent systems (MAS) has grown significantly in recent years, with applications ranging from UAV swarm coordination for mapping and surveillance \cite{Martinez2006} to satellite constellation coordination for efficient communication relays \cite{Chenyu2024}. The role of a formation control scheme is to provide a control law that steers the agents into a desired spatial configuration in a distributed fashion. This is commonly achieved by imposing explicit geometric constraints between neighboring agents. For example, distance-based schemes \cite{Krick2009,OH2015424} fix inter-agent distances, while bearing-based schemes \cite{ZHAO2019_CSM} fix relative directions. In both cases, the desired target configuration is characterized using \textit{only} local information shared between neighboring agents. 

In many formations, the desired configuration exhibits spatial symmetries between agents such as rotations and/or reflections, often inherent to sensing coverage or communication requirements. The work \cite{Zelazo2025forced} introduced an approach that leverages formation symmetries together with inter-agent distance constraints, drastically reducing the required inter-agent communication links compared with traditional distance-based approaches. This motivates the question of whether it is possible to design a formation control scheme that relies solely on symmetry constraints.

Graph theory provides a natural framework for modeling the decentralization, interaction topology, and geometric configuration of a MAS. Agents are represented as nodes (vertices), with communication links as edges. A central challenge lies in balancing sparse information exchange while ensuring convergence to the desired configuration. To address this challenge, distance-based and bearing-based schemes leverage results from rigidity theory, relying on minimal infinitesimal rigidity (MIR) as a crucial architectural requirement to guarantee convergence to a desired shape \cite{Krick2009, ZHAO2019_CSM}. For distance-based approaches in $\mathbb{R}^2$, the MIR property requires at least $2n-3$ edges to uniquely determine the formation (up to translations, rotation, and so-called flip ambiguities), where $n$ is the number of agents. 

A closely related line of research builds on augmented Laplacian formulations, such as complex-Laplacian approaches \cite{deMarina2020, Zhou2025, Trinh_AUT2018}, where complex weights replace the standard scalar weights of the graph Laplacian and encode inter-agent rotations, translations, and scalings. Matrix-weighted Laplacians provide a related framework, where structured matrix weights encode inter-agent geometric relations. This idea has appeared in several formation control settings: bearing-based control \cite{ZHAO2019_CSM}, where projection matrices enforce relative direction constraints, and more recent works \cite{Fan2025}, where matrix weights play the same role as complex weights. Together, these works highlight the potential of matrix-weighted Laplacians for achieving formation control objectives. 

Motivated by this line of research, we build on \cite{Zelazo2025forced} to study formations defined \textit{solely} through symmetric relations. Specifically, consider a group of $n$ agents whose desired configuration is characterized by prescribed rotational symmetries. These constraints are modeled in Euclidean space via cyclic point-group isometries \cite{alt94}, enforced between designated agent pairs. Assume that each agent has access to its own state and may exchange this information \textit{only} with neighboring agents, as determined by an \textit{undirected} interaction graph. The control objective is to design a distributed control law that drives the agents from any initial state to a desired configuration satisfying the required rotation symmetry constraints.
%
This is achieved by introducing a potential function that enforces rotational symmetries between neighboring agents, whose gradient yields a distributed control law that drives the system toward the null space of a symmetry-constraining matrix-weighted Laplacian. We show that $n-1$ edges (the minimal connectivity requirement) are sufficient to guarantee convergence to the desired formation. To enhance flexibility, we present an augmentation of the control strategy, enabling the desired formation to be achieved while undergoing coordinated translations, rotations, and scalings according to a time-varying reference virtual trajectory, thereby addressing the formation maneuvering problem. Additionally, the effectiveness of the approach is demonstrated through a numerical example extension in $\R^3$.

The paper is organized as follows. Section \ref{sec.symmetry} reviews the mathematics of symmetry, focusing on graphs and frameworks. Section \ref{sec.formation} introduces the symmetry constrained formation control problem and presents the controller. Section \ref{sec.maneuver} extends the controller to allow for formation maneuvers, while Section \ref{sec.R3} demonstrates numerically an extension to $\R^3$. Finally, concluding remarks are offered in Section \ref{sec.conclusion}.

\paragraph*{Notations} 
A graph $\G=(\V,\E)$ consists of two non-empty sets: $\V=\{1,...,n\}$, the set of nodes, and $\E\subseteq \V \times \V$, the set of edges. In this work, $\G$ is assumed to be undirected. The notation $ij\in\E$ indicates that agent $i$ can receive information from neighboring agent $j$, and vice versa. Let $I_n\in\mathbb{R}^{n\times n}$ be the identity matrix, and let $\mathds{1}_n \in \mathbb{R}^n$ be the all-one column vector of dimension $n$. Let $\otimes$ denote the Kronecker product.

\section{Symmetry in Graphs and Frameworks}\label{sec.symmetry}

The main focus of this work is to leverage the inherent symmetries of a desired configuration to solve the formation control problem. In this direction, we first review notions from group theory and graph theory used to formally define symmetry.
\subsection{Symmetry in Graphs}
Group theory provides a powerful mathematical framework for describing symmetry. In the context of graphs, symmetries correspond to structure-preserving transformations of the vertex set—formally captured by the notion of \emph{automorphisms}. The collection of all such transformations forms a group, known as the \emph{automorphism group} of the graph.
We begin by briefly recalling the definition of a group.
\begin{definition}
A \emph{group} is a set $\Gamma$ equipped with a binary operation $\circ$ such that:
\begin{itemize}
    \item \textbf{Closure:} For all $a,b \in \Gamma$, the composition $a \circ b$ is also in $\Gamma$.
    \item \textbf{Associativity:} $(a \circ b) \circ c = a \circ (b \circ c)$ for all $a,b,c \in \Gamma$.
    \item \textbf{Identity:} There exists an element $\mathrm{id} \in \Gamma$ such that $a \circ \mathrm{id} = \mathrm{id} \circ a = a$ for all $a \in \Gamma$.
    \item \textbf{Inverses:} For each $a \in \Gamma$, there exists an inverse $a^{-1} \in \Gamma$ such that $a \circ a^{-1} = a^{-1} \circ a = \mathrm{id}$.
\end{itemize}
The \emph{order} of a group is the number of its elements. A subset $B\subseteq\Gamma$ that is itself a group under $\circ$ is called a \emph{subgroup}.
\end{definition}

In the setting of graphs, these ideas appear naturally when considering automorphisms.
\begin{definition}
Let $\G = (\V, \E)$ be a finite, simple graph. An \emph{automorphism} of $\G$ is a permutation $\psi : \V \to \V$ such that
$$
    uv \in \E \quad \Leftrightarrow \quad \psi(u)\psi(v) \in \E.
$$
\end{definition}
That is, an automorphism preserves the adjacency structure of the graph. The identity permutation $\mathrm{id}$ is always an automorphism, and if $\psi$ is an automorphism, then so is its inverse $\psi^{-1}$. Moreover, the composition of two automorphisms is again an automorphism. These properties ensure that the set of all automorphisms of $\G$ forms a group under composition. This group is called the \emph{automorphism group} of $\G$, denoted by $\Aut(\G)$.
One can express every permutation as a composition of disjoint cycles. A cycle is a successive action of the permutation that sends a vertex back to itself, i.e., $i\to \psi(i) \to \psi(\psi(i)) \to \cdots \to \psi^k(i)=i$, where $\psi^k = \underbrace{\psi \circ \cdots \circ \psi}_{k \text{ times }}$. Such a cycle is compactly written using the \emph{cycle notation}, denoted by $(i\,\psi(i)\,\cdots \psi^{k-1}(i))$. The integer $k$ is the \emph{length} of the cycle.
\begin{definition}\label{def:Gamma_sym_def}
A graph $\G$ is \emph{$\Gamma$-symmetric} for any subgroup $\Gamma \subseteq \Aut(\G)$.
\end{definition}
\begin{example} \label{ex:aut3}
Fig.~\ref{fig:c3_ex1} shows the cycle graph $C_3$. We can identify all the automorphisms of $\Aut(C_3)$. We first identify the identity permutation $\mathrm{id}$. Additionally, consider a counter-clockwise rotation by $120^\circ$ of $C_3$. This gives the automorphism (in cycle notation) $\psi_1=(1\,2\,3)$. We also have $\psi_2=\psi_1^2=\psi_1\circ \psi_1=(1\,3\,2)$, which can be interpreted geometrically as a rotation by $240^\circ$.
Additional permutations can be found by considering reflections. Consider first the reflection about the vertical blue line, giving the permutation $\psi_3=(1)(2 \, 3)$. Similarly, the reflection about the red line yields $\psi_4=(3)(1\,2)$, and the reflection about the green line gives $\psi_5=(2)(1\, 3)$. Thus, $\Aut(C_3)=\{\mathrm{id},\psi_1,\ldots,\psi_5\}$ has 6 automorphisms.
\begin{figure} [!h]
\vspace{-0.4cm}
    \begin{center}    \includegraphics[width=0.22\linewidth]{figures/c3_ex1.tex}
    \end{center}
\caption{Cycle graph $C_3$, with $6$ automorphisms in $\Aut(\G)$.}\label{fig:c3_ex1}
\end{figure}
\end{example}
\vspace{-0.2cm}

Note that the choice $\Gamma=\{\mathrm{id}, \psi_1,\psi_1^2\}$ corresponds to the subgroup of rotational automorphisms of $C_3$. In this case, $C_3$ can be considered as a $\Gamma$-symmetric graph, where any vertex can be mapped to any other under the rotation actions of $\Gamma$.
\subsection{Symmetry in frameworks}\label{sec:symmetry_fwks}
The embedding of symmetric graphs in Euclidean space is of interest, especially for formation control problems. In this direction, we now consider symmetry of frameworks \cite{Bernd2017sym}. A framework in $\mathbb{R}^2$ is defined as the pair $(\G,p)$, where $p: \V\rightarrow\mathbb{R}^2$ assigns each node in $\G$ a position in Euclidean space, which represents the physical position of the agents in the network. 

\begin{definition}\label{def:tau-gamma}
Let $\Gamma$ be represented as a \emph{point group}, i.e., a subgroup of the orthogonal group $O(\mathbb{R}^2)$, via a homomorphism $\tau:\Gamma\rightarrow O(\mathbb{R}^2)$, which assigns to each $\gamma\in\Gamma$ an isometry in $\mathbb{R}^2$. A framework $(\G,p)$ is called \emph{$\tau(\Gamma)$-symmetric} if 
\begin{equation}\label{eq:symfwk}
\tau(\gamma) p_i=p_{\gamma (i)} \quad \forall \gamma\in \Gamma ,\quad i\in \V.
\end{equation}
\end{definition}

In this work, we restrict our study to frameworks whose underlying graph $\G$ is the cycle graph $C_n$. Using the standard Schoenflies notation for point groups \cite{alt94,atk70}, we consider the rotational symmetries described by the cyclic point group $\mathcal{C}_n$ of order $n \geq 1$. That is, $\mathcal{C}_n$ specifies the rotation symmetries that map agents to one another under rotations about the origin.

We define $\Gamma_r\in\Aut(C_n)$ as the subgroup of rotational automorphisms of $C_n$, where each element $\tau(\gamma)$ is represented as a rotation about the origin by an angle $\theta=2\pi/n$. Then, in a planar setting (i.e., in $\mathbb R^2$), $\tau(\Gamma_r)$ coincides with the cyclic point group $\mathcal{C}_n$. We represent the elements $\tau(\gamma)$ by the standard rotation matrix $R(\theta)\in SO(2).$
Thus, for any two vertices $i,j\in\V$ of a $\mathcal{C}_n$-symmetric framework, we denote by $\gamma_{ij}\in\Gamma_r$ the group element satisfying $\tau(\gamma_{ij})p_i=p_j$ with $\tau(\gamma_{ij})=R(\theta)$ and consequently, $\tau(\gamma_{ji})=R(\theta)^T$. In each case, the desired configuration satisfies condition \eqref{eq:symfwk}, where agent positions are mapped to one another by the corresponding rotation in $\mathcal{C}_n$.
 
\begin{figure}[!b]
\begin{center}
\vspace{-0.2cm}
\includegraphics[width=0.75\linewidth]{figures/cn_fw_ex1.tex}
    \end{center}
\vspace{-0.2cm}
\caption{Symmetric frameworks with $C_n$ as the underlying graph. (a) and (b) are $\mathcal{C}_4$-symmetric, and (c) is $\mathcal{C}_6$-symmetric.}
\label{fig:sym_fw_Cn}
\end{figure}
\section{Symmetry-based Formation Control}\label{sec.formation}

We consider a team of $n$ agents modeled by the integrator dynamics
\begin{align}\label{int-dynamics}
\dot p_i(t) = u_i(t), \quad i \in \{1, \ldots, n\},
\end{align}
where $p_i(t) \in \mathbb{R}^2$ is the position of agent $i$ and $u_i(t) \in \mathbb{R}^2$ is its control to be designed. The coordination objective we consider is for the agents to arrange themselves into a configuration characterized by a specific symmetry class - rotation relationships between neighboring agents. Assume the desired configuration is a $\mathcal{C}_n$-symmetric framework with the cycle graph $C_n$ as the underlying graph. 

We define an \emph{interaction graph} $\G_I = (\V,\E_I)$ to specify which 
agents are able to exchange information. This graph is defined as a spanning tree subgraph of $C_n$, to ensure the minimal connectivity requirement between agents in the MAS. For example, for a $\mathcal{C}_4$-symmetric 
formation in Fig. \ref{fig:sym_fw_Cn}(a), the interaction 
edge set $\E_I = \{12,23,34\}$ satisfies the connectivity
requirement.

Let $\Gamma_r \subseteq \Aut(C_n)$ denote the subgroup of 
rotational automorphisms of $C_n$, and let $\G_I=(\V,\E_I)$ be the interaction graph defined as a spanning tree subgraph of $C_n$. The control objective is to design a distributed control law $u_i(t)$ for each agent $i$ such that, for every edge $ij \in \mathcal{E}_I$, we have
\begin{align}\label{sym_problem}
    \lim_{t \to \infty} \big\| p_i(t) - \tau(\gamma_{ji})\, p_j(t) \big\| = 0.
\end{align}
Here, $\gamma_{ji} \in \Gamma_r$ is the permutation mapping $j$ to $i$ and $\tau(\gamma_{ji})$ is the associated point group element representing a rotation predefined for that edge.

We show that the interaction graph $\G_I$, chosen as a 
spanning tree subgraph of $C_n$, suffices to solve the formation 
control problem. This implies that only $(n-1)$ 
edges are required to guarantee convergence to a 
$\mathcal{C}_n$-symmetric formation.

\subsection{Symmetry-based Control Law}
Similar to the idea presented in \cite{Zelazo2025forced}, we define a \textit{symmetry-forcing potential} over the edges in the interaction graph,
\textcolor{black}{\begin{align}\label{sympotential1}
F(p(t)) = \frac{1}{2} \sum_{ij\in \mathcal{E}_I} \|p_i(t)-\tau(\gamma_{ji} )p_j(t)\|^2.
\end{align}}

To solve the formation control problem \eqref{sym_problem}, we now propose the law defined by the gradient dynamical system
\begin{align}\label{symform_acquire}u(t) = -\nabla F(p(t)).
\end{align}
Then, we obtain the expression of the closed-loop dynamics for each agent $i$:
\begin{align}\label{ea_dyn}
    \dot{p}_i(t)&= \sum_{{ij\in\mathcal{E}_I}}(\tau({\gamma_{ji}})p_j(t)-p_i(t)). 
\end{align}

The closed-loop dynamics of each agent \eqref{ea_dyn} has a straightforward geometric interpretation. The control law attempts to reduce the error vector $p_i-\tau(\gamma_{ji})p_j$, i.e., it attempts to align the state of agent $i$ with the rotated state of its neighbor $j$ according to the prescribed symmetry action. 

We now focus on the closed-loop dynamics in state-space, which has the form
\begin{align}\label{ctrl_1}\dot{p}(t) = - Qp(t),
\end{align}
where $Q\in\R^{2n\times 2n}$ is the resulting \textit{symmetry-constraining} matrix-weighted Laplacian for graph $\G_I$, with the block entries
\textcolor{black}{$$ [Q]_{ij} = \begin{cases}
                d(i)I_2, & i=j, \, i \in \V \\
                -\tau(\gamma_{ji}), & ij\in \mathcal{E}_I \\
                0, & \text{o.w.}
            \end{cases},$$
where $d(u)$ denotes the degree of node $u$ in the induced subgraph $\G_I$}. As a matrix-weighted Laplacian, observe that $Q$ can be expressed as the matrix product $E(\Gamma_r)E(\Gamma_r)^T$, where $E(\Gamma_r)\in\mathbb{R}^{2n \times 2|\E_I|}$ has a matrix-weighted incidence matrix structure with its block-columns being associated with the edge $ij$,
$$[\,\cdots\ \underbrace{I_2}_{\text{node } i}\ \cdots\ \underbrace{-\tau(\gamma_{ji})^T}_{\text{node }j}\ \cdots]^T.$$
\begin{example}
Consider the $\mathcal{C}_4$-symmetric framework seen in Fig. \ref{fig:sym_fw_Cn}(a), with the choice of the edge set $\E_I = \{12, 23,34\}$. Then, the group actions in $\tau(\Gamma_r)$ are rotations about the origin by $2\pi/n=\pi/2$, and the corresponding matrix $Q \in \mathbb{R}^{8\times 8}$ can be expressed as
$$
Q = 
\left[\begin{smallmatrix}
I_2 & -R(\frac{\pi}{2})^T & 0 & 0 \\
-R(\frac{\pi}{2}) & 2I_2 & -R(\frac{\pi}{2})^T & 0 \\
0 & -R(\frac{\pi}{2}) & 2I_2 & -R(\frac{\pi}{2})^T \\
0 & 0 & -R(\frac{\pi}{2}) & I_2
\end{smallmatrix}\right].
$$
\end{example}
Note that $\Null(Q)$ coincides with the set of $\mathcal{C}_4$-symmetric configurations satisfying \eqref{eq:symfwk}. That is, $\tau(\gamma)p_i = p_{\gamma(i)}$ for all $\gamma\in\Gamma_r$ and $i\in \V$.
\begin{proposition}\label{prop:Q_def}
Let $Q$ be the symmetry-constraining matrix-weighted Laplacian associated with $\G_I$. Then, (i) $Q$ is positive semi-definite (PSD). (ii) $Q$ has a nontrivial null-space, $\Null(Q)=\{p\in\mathbb{R}^{2n}| E(\Gamma_r)^T p=0\}$, corresponding to the set of $\mathcal{C}_n$-symmetric configurations, and (iii) the rank of $Q$ is $2n-2$, and $\dim \Null(Q)=2$.
\end{proposition}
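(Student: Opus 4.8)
The plan is to exploit the factorization $Q = E(\Gamma_r)E(\Gamma_r)^T$ throughout. For part (i), positive semi-definiteness is immediate: for any $p \in \mathbb{R}^{2n}$ we have $p^T Q p = p^T E(\Gamma_r) E(\Gamma_r)^T p = \|E(\Gamma_r)^T p\|^2 \ge 0$. For part (ii), the same factorization shows $\Null(Q) = \Null(E(\Gamma_r)^T)$, since $Q p = 0 \iff \|E(\Gamma_r)^T p\|^2 = 0 \iff E(\Gamma_r)^T p = 0$. It then remains to identify $\Null(E(\Gamma_r)^T)$ with the set of $\mathcal{C}_n$-symmetric configurations. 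Reading off the block-row of $E(\Gamma_r)^T$ associated with edge $ij$, the condition $E(\Gamma_r)^T p = 0$ is exactly $p_i - \tau(\gamma_{ji})^T p_j = 0$, i.e. $p_i = \tau(\gamma_{ij}) p_j$ for every edge $ij \in \mathcal{E}_I$; since $\G_I$ is a spanning tree of $C_n$ and the $\tau(\gamma_{ij})$ are consistent rotations (composing along any path gives the correct relative rotation), this edge-wise system is equivalent to the global symmetry condition \eqref{eq:symfwk}. The main work in part (ii) is this last equivalence, and the key point is that a spanning tree has enough edges to propagate the symmetry relation to every vertex, while having no cycles means no conflicting constraints arise.

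For part (iii), the dimension count is the crux. The strategy is to compute $\dim \Null(E(\Gamma_r)^T)$ directly. Fix a root vertex, say vertex $1$; since $\G_I$ is a tree, for any choice of $p_1 \in \mathbb{R}^2$ the relations $p_i = \tau(\gamma_{ij}) p_j$ along tree edges uniquely determine $p_i$ for every other vertex $i$ (walk along the unique tree path from $1$ to $i$ and compose the rotations). Hence the map $p \mapsto p_1$ is a linear isomorphism from $\Null(E(\Gamma_r)^T)$ onto $\mathbb{R}^2$, giving $\dim \Null(Q) = 2$. Consequently $\operatorname{rank}(Q) = 2n - \dim\Null(Q) = 2n - 2$. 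One should also check that these propagated solutions do not produce additional constraints: because the tree is acyclic, there is no returning path that could force $p_1$ into a proper subspace, so all of $\mathbb{R}^2$ is attained — this is precisely where the spanning-tree (as opposed to general connected graph) hypothesis is used, and it is the step I expect to require the most care to state cleanly.

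The main obstacle is making the tree-propagation argument rigorous without getting bogged down: one must argue that the assignment $p_i := \bigl(\prod_{\text{edges on path } 1 \to i} \tau(\gamma_{\cdot\cdot})\bigr) p_1$ is well-defined (unique path in a tree) and genuinely lies in $\Null(E(\Gamma_r)^T)$ (each tree edge's constraint is satisfied by construction, and there are no non-tree edges in $\G_I$ to worry about). Once this bijection with $\mathbb{R}^2$ is established, parts (i)–(iii) follow quickly from the factorization and rank–nullity. A minor point worth a sentence is that $Q$ is symmetric (so PSD is meaningful and the rank–nullity split is orthogonal), which is clear from $[Q]_{uv}^T = [Q]_{vu}$ using $\tau(\gamma_{vu})^T = \tau(\gamma_{uv})$.
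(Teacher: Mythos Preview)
Your proposal is correct and follows essentially the same route as the paper: exploit $Q=E(\Gamma_r)E(\Gamma_r)^T$ for (i) and for $\Null(Q)=\Null(E(\Gamma_r)^T)$ in (ii), then for (iii) fix a root and propagate $p_i$ along the unique tree path via composed rotations to exhibit a linear bijection $\Null(Q)\cong\mathbb{R}^2$. The paper streamlines the propagation step by relabeling so that every tree edge is $i(i+1)$ and writing the accumulated rotation as $S_{i+1}=\tau(\gamma_{i(i+1)})S_i$, but this is exactly your path-composition argument in disguise.
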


\begin{proof}
($i$) Since $Q=E(\Gamma_r)E(\Gamma_r)^T$, for any $p\in\R^{2n}$ we have
$p^TQp=p^TE(\Gamma_r)E(\Gamma_r)^Tp=\|E(\Gamma_r)^Tp\|^2\geq 0.$ Hence, $Q$ is PSD. ($ii$) Without loss of generality, assume the nodes are labeled such that each edge in $\G_I$ is of the form $i(i+1)$.
By construction, $E(\Gamma_r)^Tp=\begin{bmatrix}
r_1^T&\cdots&r_{|\E_I|}^T
\end{bmatrix}^T
\in\mathbb{R}^{2|\E_I|}$ stacks the edge errors $r_i = p_i - \tau(\gamma_{(i(i+1))})^T\,p_{(i+1)}\in\mathbb{R}^2$. Then, $E(\Gamma_r)^T p=0$ iff every edge satisfies $p_i-\tau(\gamma_{ji})p_j=0$. Therefore, by Definition \ref{def:tau-gamma}, $\Null(Q)$ coincides with the respective set of $\mathcal{C}_n$-symmetric configurations. ($iii$) Define $S_1=I_2$ and, for each node $j$, let $S_j\in SO(2)$ denote the ordered product of edge rotations along the unique path from node $1$ to node $j$, so that along each edge $i(i+1)$ we have $S_{i+1}=\tau(\gamma_{i(i+1)})S_i\in SO(2)$.
Let $p_1=q\in\R^2$. Then, if $E(\Gamma_r)^Tp=0$, by substitution we have $p_i \;=\; S_i\,q$ for all $i\in\{1,\dots,n\}$, such that $E(\Gamma_r)^T\begin{bmatrix}(S_1q)^T & (S_2q)^T& \cdots & (S_nq)^T \end{bmatrix}^T = 0$. Note that $q\in\mathbb{R}^2$ has two degrees of freedom. Hence, any vector in $\Null(E(\Gamma_r)^T)=\Null(Q)$ lies in $\IM(V_0)$, 
where 
\begin{align}\label{v0_def}
    V_0 =\left[
\begin{smallmatrix}
S_1e_1 & S_1e_2\\
\vdots & \vdots\\
S_ne_1 & S_ne_2
\end{smallmatrix}\right],
\end{align}
with $e_1=\begin{bmatrix}1&0\end{bmatrix}^T$ and $e_2=\begin{bmatrix}0&1\end{bmatrix}^T$.
Therefore, $\dim\Null(Q)= 2$, and $\mathrm{rank}(Q) = 2n - \dim\Null(Q) = 2n-2$.
\end{proof}

We now examine the dynamics of the closed-loop system \eqref{ctrl_1} to show that the proposed distributed control law drives the agents from any initial condition to the desired symmetric configuration. We derive the explicit solution of \eqref{ctrl_1}, showing that the limit configuration corresponds to the orthogonal projection of the initial state onto the subspace of $\mathcal{C}_n$-symmetric formations.

\begin{theorem}\label{th1}
    Consider a MAS consisting of $n$ integrator agents \eqref{int-dynamics}, whose interaction topology is defined by $\G_I$, and let
    $$\mathcal{F}=\{p\in\R^{2n}|\tau(\gamma)p_i=p_{\gamma(i)},\;\;\forall\gamma\in\Gamma_r,\;i\in \V\}.$$
    Then, for any initial condition $p(0) \in \mathbb{R}^{2n}$, 
    the control \eqref{ctrl_1} renders the set $\mathcal{F}$ exponentially stable, with $p(\infty)$ as the orthogonal projection of $p(0)$ onto $\mathcal{F}$,
\begin{align}\label{ts1}
    \lim_{t\to\infty}p(t)=\frac{1}{n}V_0V_0^{\top}p(0),
\end{align}
where $V_0$ is given in \eqref{v0_def}. Furthermore, the steady-state of each agent is given by 
\begin{align}\label{ts1_ea}
\lim_{t\rightarrow \infty}p_i(t)=\frac{1}{n}S_i\sum_{k=1}^n S_k^Tp_k(0).
\end{align}
\end{theorem}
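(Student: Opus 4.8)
The plan is to use that \eqref{ctrl_1} is a linear time-invariant system with symmetric positive semi-definite generator $Q$, so its flow is $p(t)=e^{-Qt}p(0)$, and to extract the asymptotics from the spectral structure of $Q$ already established in Proposition~\ref{prop:Q_def}. Recall from that proposition that the set $\mathcal{F}$ in the statement is exactly $\Null(Q)$, that $\mathrm{rank}(Q)=2n-2$ so $\dim\Null(Q)=2$, and that $\IM(V_0)=\Null(Q)$.

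First I would set up the orthogonal splitting $\R^{2n}=\Null(Q)\oplus\IM(Q)$, valid since $Q=Q^{\top}$, and let $\Pi$ be the orthogonal projector onto $\Null(Q)=\mathcal{F}$. Since $e^{-Qt}$ is a function of $Q$, it commutes with $\Pi$, restricts to the identity on $\mathcal{F}$, and contracts on $\IM(Q)$: denoting by $\lambda>0$ the smallest nonzero eigenvalue of $Q$ (which exists because $\dim\Null(Q)=2<2n$), one has $\|e^{-Qt}(I_{2n}-\Pi)\|\le e^{-\lambda t}$. Writing $p(0)=\Pi p(0)+(I_{2n}-\Pi)p(0)$ and applying the flow gives $p(t)=\Pi p(0)+e^{-Qt}(I_{2n}-\Pi)p(0)$. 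Hence $\Pi p(t)=\Pi p(0)$ for all $t$, so $\mathrm{dist}(p(t),\mathcal{F})=\|(I_{2n}-\Pi)p(t)\|=\|e^{-Qt}(I_{2n}-\Pi)p(0)\|\le e^{-\lambda t}\,\mathrm{dist}(p(0),\mathcal{F})$, which is exactly exponential stability of $\mathcal{F}$ with rate $\lambda$, and $p(t)\to\Pi p(0)\in\mathcal{F}$.

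Next I would identify $\Pi$ with $\tfrac1n V_0V_0^{\top}$. By Proposition~\ref{prop:Q_def}(iii) the two columns of $V_0$ form a basis of $\mathcal{F}$, so $\Pi=V_0(V_0^{\top}V_0)^{-1}V_0^{\top}$. Viewing $V_0$ blockwise, its $i$-th block row is $[\,S_ie_1\ \ S_ie_2\,]=S_i\in SO(2)$; orthogonality of each $S_i$ then yields $V_0^{\top}V_0=\sum_{i=1}^n S_i^{\top}S_i=nI_2$, so $\Pi=\tfrac1n V_0V_0^{\top}$ and \eqref{ts1} follows. The per-agent formula \eqref{ts1_ea} is obtained by reading off the $i$-th block, $[\tfrac1n V_0V_0^{\top}p(0)]_i=\tfrac1n S_i\sum_{k=1}^n S_k^{\top}p_k(0)$.

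I do not expect a substantive obstacle here: the nontrivial content—the rank of $Q$ and the explicit kernel basis $V_0$—is already carried by Proposition~\ref{prop:Q_def}. The only points requiring care are making the set-stability claim rigorous, which is handled above via the invariance $\Pi p(t)=\Pi p(0)$ together with the uniform contraction rate $\lambda$ on $\IM(Q)$, and the elementary bookkeeping identity $V_0^{\top}V_0=nI_2$, which uses only $S_i\in SO(2)$.
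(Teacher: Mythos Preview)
Your proposal is correct and follows essentially the same route as the paper: both arguments exploit that $Q$ is symmetric PSD with $\Null(Q)=\IM(V_0)=\mathcal{F}$ (Proposition~\ref{prop:Q_def}), compute $V_0^{\top}V_0=nI_2$ from $S_i\in SO(2)$, and read off the limit of $e^{-Qt}p(0)$ as the orthogonal projection $\tfrac1n V_0V_0^{\top}p(0)$. The only cosmetic difference is that the paper writes out the full eigendecomposition $Q=\hat V\diag(0,\Lambda_+)\hat V^{\top}$ with $\hat V_0=\tfrac{1}{\sqrt n}V_0$, whereas you package the same content in the projector $\Pi$ and the splitting $\R^{2n}=\Null(Q)\oplus\IM(Q)$; your treatment of the set-stability claim via $\mathrm{dist}(p(t),\mathcal F)\le e^{-\lambda t}\mathrm{dist}(p(0),\mathcal F)$ is in fact slightly more explicit than the paper's.
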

\begin{proof}
By Proposition \ref{prop:Q_def}, note that $V_0$ is the vertical stack of the $2\times2$ blocks $S_i\begin{bmatrix}
    e_1 & e_2
\end{bmatrix}=S_i$. 
Since $S_i\in SO(2)$ are orthogonal matrices and $V_0^T V_0= \Sigma_{i=1}^n S_i^T S_i = nI_2$,
then the columns of $V_0$ are orthogonal as well. Since $Q$ is PSD and the columns of $V_0$ are orthogonal, we can define $\hat{V}$ as an orthonormal eigenbasis $\hat{V}=\begin{bmatrix}
    \hat{V}_0 & \hat{V}_+
\end{bmatrix}$ with $\hat{V}_0=\frac{1}
{\sqrt{n}}V_0$ and $V_+$ the orthogonal complement of $V_0$.
Hence, $Q$ is equivalent to $Q=\hat{V}\left[\begin{smallmatrix}0 & 0\\0&\Lambda_+\end{smallmatrix}\right]\hat{V}^T,
$ and the closed-form solution of \eqref{ctrl_1} results in $p(t)=e^{-Qt}p(0)=\hat{V}\left[\begin{smallmatrix}I_2 & 0\\0&e^{-\Lambda_+t}\end{smallmatrix}\right]\hat{V}^Tp(0)$. Since all non-zero eigenvalues of $-Q$ are in OLHP, the dynamics of $p(t)$ exponentially converge to $p(\infty)=\hat{V}_0\hat{V}_0^Tp(0)=\frac{1}{n}V_0V_0^{\top}p(0)$. Note that $V_0^Tp(0)=\sum_{k=1}^nS_k^Tp_k(0)$. Then, the block expression for each agent is $p_i(\infty)=\frac{1}{n}S_i\sum_{k=1}^n S_k^Tp_k(0).$
Moreover, observe that $E(\Gamma_r)^Tp=0 $ implies $ p_i=S_iq$ for some $q\in\R^2$. Hence, $\Null(Q)=\IM(V_0)=\mathcal{F}$, rendering $\mathcal{F}$ exponentially stable as claimed.
\end{proof}

\begin{example}\label{ex:c6_ex1}
Consider a MAS consisting of $n=6$ agents, tasked with attaining a $\C_6$-symmetric configuration (see Fig. \ref{fig:sym_fw_Cn}(c)). Fig. \ref{fig:c6_combined}(a) illustrates the underlying $\Gamma$-symmetric graph, with the dashed edge being removed for the chosen communication topology graph $\G_I$ with $5$ edges. Note that by using a distance \cite{Krick2009} or bearing \cite{ZHAO2019_CSM} approach we would require $9$ edges in total to ensure the correct formation shape in $\R^2$.
\begin{figure}[H]
    \vspace{-0.25cm}
    \centering
    \begin{subfigure}{0.35\linewidth}
        \centering        \includegraphics[width=0.6\linewidth]{figures/c6_sim_gr.tex}
        \vspace{0.4cm}
        \hspace{-0.3cm}
        \caption{Underlying cycle graph $C_6$.}
        \label{fig:sim1_tg} 
        \vspace{-0.1cm}
    \end{subfigure}
    \hfill
    \begin{subfigure}{0.6\linewidth}
        \centering
        \includegraphics[width=0.8\linewidth]{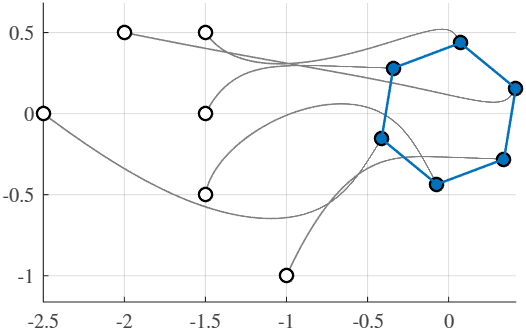}
        \caption{Trajectories generated from \eqref{ctrl_1}.}
        \label{fig:c6_st1_fig}
        \vspace{0.25cm}
    \end{subfigure}

    \caption{Graph and simulation results for Example~\ref{ex:c6_ex1}.}
    \label{fig:c6_combined}
\end{figure}
    \vspace{-0.25cm}
Fig. \ref{fig:c6_combined}(b) illustrates the trajectories and resulting configuration obtained by implementing the proposed control law \eqref{ctrl_1}, where the corresponding matrix $Q$ is constructed using the rotation matrices $\tau(\gamma_{ij})=R(\pi/3)$. 
\end{example}

\section{Formation Maneuvering}\label{sec.maneuver}

Our main focus has been on achieving and maintaining a target formation shape. Observe in Fig. \ref{fig:c6_st1_fig} that, due to symmetry, the control \eqref{ctrl_1} successfully drives the agents to a desired formation shape, but with respect to an inertial (global) origin. This may be limiting in many practical scenarios where the formation requires the ability to maneuver, that is, translate, rotate, and scale while preserving the desired shape. To improve the flexibility of a symmetry-based formation control approach, 
\cite{Zelazo2025forced} proposed augmenting the closed-loop dynamics for each agent \eqref{ea_dyn} with a virtual state $r(t)$ that enables the agents to agree on a different origin. We leverage this idea to address formation maneuvering as well.

\begin{assumption}\label{as_mv}
    Each agent in the MAS has access to a virtual trajectory predefined by a time-varying
    \begin{itemize}
        \item[i)] translation $r(t)\in\R^2$ with $\dot{r}(t)=v(t)$;
        \item[ii)] rotation $\mathcal{R}(t)\in SO(2)$ with $\dot{\mathcal{R}}(t)=\Omega(t)\mathcal{R}(t)$ where $\Omega(t) = \left[\begin{smallmatrix}
            0 & -\omega(t) \\ \omega(t) & 0
        \end{smallmatrix}\right]$, and $\omega(t)$ is the desired angular velocity of the formation;
        \item[iii)] scale factor $s(t)\in\R^+$, with $\dot{s}(t)=\alpha(t)s(t)$, $\alpha(t) \in\R$.
    \end{itemize}
\end{assumption}

Reference trajectories are known a priori in many applications \cite{Queiroz2019}. Therefore, building on Assumption \ref{as_mv}, we define a shifted state for each agent, $c_i(t)=p_i(t)-r(t)$. Moreover, since the formation is specified with respect to a fixed inertial point, we define the centroid of the formation at the origin with the axis of rotation passing through it. Under Assumption \ref{as_mv}, we then propose the augmented control law
\begin{align}\label{ctrl_2}
    \hspace{-7pt}u(t)=&-Qc(t)+\mathds{1}_n\!\otimes \! v(t)+ 
    (I_n \! \otimes \! \Omega(t)+\alpha(t)) c(t),
\end{align}
which enables the agents to converge to the desired configuration while maneuvering along the predefined trajectory.
\begin{theorem}
 Consider a MAS consisting of $n$ integrator agents \eqref{int-dynamics} satisfying Assumption \ref{as_mv}, whose interaction topology is defined by $\G_I$, and let
    $$\mathcal{F}_c=\{p\in\R^{2n}|\tau(\gamma)c_i=c_{\gamma(i)},\;\;\forall\gamma\in\Gamma_r,\;i\in \V\}$$
    be the set of all shifted $\mathcal{C}_n$-symmetric configurations.
    Then, for any initial condition $p(0) \in \mathbb{R}^{2n}$, 
    the control \eqref{ctrl_2} renders the set $\mathcal{F}_c$ exponentially stable. 
\end{theorem}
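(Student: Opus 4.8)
The plan is to reduce the maneuvering closed loop to the static closed loop of Theorem~\ref{th1} by a time-varying change of coordinates, and then transfer the exponential convergence back. First I would pass to the shifted error $c(t)=p(t)-\mathds{1}_n\otimes r(t)$, so that $\dot c = u-\mathds{1}_n\otimes v$; substituting the control law \eqref{ctrl_2}, the feedforward translation terms cancel and one is left with the linear time-varying system $\dot c(t)=-Qc(t)+\bigl(I_n\otimes\Omega(t)+\alpha(t)I_{2n}\bigr)c(t)$. I would also record that $\mathcal{F}_c$ is simply the time-varying translate $\mathcal{F}+\mathds{1}_n\otimes r(t)$ of the static symmetric set $\mathcal{F}=\Null(Q)=\IM(V_0)$ from Theorem~\ref{th1}, so that $\mathrm{dist}\bigl(p(t),\mathcal{F}_c\bigr)=\mathrm{dist}\bigl(c(t),\mathcal{F}\bigr)$; thus it suffices to show $c(t)\to\mathcal{F}$ exponentially.

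The key observation is that every block of $Q$, as well as $\mathcal{R}(t)$ and $\Omega(t)$, commute pairwise, since all are $2\times 2$ matrices of the form $aI_2+bJ$ with $J$ the standard rotation generator; hence $I_n\otimes\mathcal{R}(t)$ commutes with both $Q$ and $I_n\otimes\Omega(t)$, and moreover $(I_n\otimes\mathcal{R})V_0=V_0\mathcal{R}$ (because $\mathcal{R}S_i=S_i\mathcal{R}$ for the $S_i$ of \eqref{si_def}), so $\mathcal{F}=\IM(V_0)$ is invariant under $I_n\otimes\mathcal{R}(t)$ and under scaling by $s(t)>0$. I would then introduce $z(t)=\tfrac{1}{s(t)}\bigl(I_n\otimes\mathcal{R}(t)^{-1}\bigr)c(t)$. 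Using $\dot{\mathcal R}=\Omega\mathcal R$, $\dot s=\alpha s$, the planar identity $\tfrac{d}{dt}\mathcal R^{-1}=-\Omega\mathcal R^{-1}$, and the commutation relations above, a direct differentiation shows that the $\Omega$-contributions cancel, the $\alpha$-contributions cancel, and one is left with exactly $\dot z(t)=-Qz(t)$. Theorem~\ref{th1} then gives $z(t)\to\tfrac1n V_0V_0^{\top}z(0)\in\mathcal{F}$ exponentially, with rate the smallest nonzero eigenvalue of $Q$. Finally, since $c(t)=s(t)\bigl(I_n\otimes\mathcal{R}(t)\bigr)z(t)$ and $s(t)(I_n\otimes\mathcal{R}(t))$ maps $\mathcal{F}$ onto $\mathcal{F}$ (and $\mathcal{F}^\perp$ onto $\mathcal{F}^\perp$), orthogonal projection commutes with it and $\mathrm{dist}(c(t),\mathcal{F})=s(t)\,\mathrm{dist}(z(t),\mathcal{F})$, which decays exponentially, establishing exponential stability of $\mathcal{F}_c$.

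The step I expect to need the most care is the last one: because $\mathcal{F}_c$ is a moving set, ``exponential stability'' must be read as exponential decay of $\mathrm{dist}(p(t),\mathcal{F}_c(t))$, and the scaling factor $s(t)$ in the pullback can a priori amplify the error, so one needs the mild and standard hypothesis that the virtual scaling rate $\alpha(\cdot)$ is dominated (on average) by the smallest nonzero eigenvalue of $Q$ — in particular $s(t)$ bounded suffices. A self-contained alternative that makes this explicit and avoids the change of variables is a Lyapunov argument directly on $\dot c=-Qc+(I_n\otimes\Omega+\alpha I_{2n})c$: decompose $c=c_\parallel+c_\perp$ along $\mathcal{F}\oplus\mathcal{F}^\perp$ (an invariant splitting, since $Q$ is symmetric with $\Null(Q)=\mathcal{F}$ and $I_n\otimes\Omega$ is skew-symmetric), take $V(c)=\tfrac12\|c_\perp\|^2$, and use $c_\perp^{\top}(I_n\otimes\Omega)c_\perp=0$ by skew-symmetry together with $c_\perp^{\top}Qc_\perp\ge\lambda_2\|c_\perp\|^2$, where $\lambda_2>0$ is the smallest nonzero eigenvalue of $Q$, to obtain $\dot V\le 2\bigl(\alpha(t)-\lambda_2\bigr)V$, from which the claimed exponential decay of $c_\perp$, hence of $\mathrm{dist}(p(t),\mathcal{F}_c)$, follows.
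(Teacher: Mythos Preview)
Your main argument is essentially the paper's own proof: both introduce the moving-frame variable $z(t)=\tfrac{1}{s(t)}(I_n\otimes\mathcal{R}(t)^{T})c(t)$, verify by direct differentiation that the maneuvering terms cancel to yield $\dot z=-Qz$, and invoke Theorem~\ref{th1}. You are in fact more careful than the paper on two points: you make explicit the commutation $[\,I_n\otimes\mathcal{R},\,Q\,]=0$ (needed to pull $\mathcal{R}^T$ through the $Q$-term), and you flag that transferring exponential decay from $z$ back to $\mathrm{dist}(c,\mathcal{F})$ picks up the factor $s(t)$, so a mild growth condition on the scaling (e.g.\ $\alpha$ dominated by $\lambda_2(Q)$, or $s$ bounded) is implicitly needed---a subtlety the paper's proof elides. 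Your alternative Lyapunov route via $V=\tfrac12\|c_\perp\|^2$ is a nice self-contained variant that makes this rate condition explicit and is not in the paper.
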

\begin{proof}
Define $\zeta(t)\in\R^{2n}$ as the configuration $p(t)\in\R^{2n}$ expressed in a frame moving along the virtual trajectory,
\begin{align}\label{moving_dynamics}
\zeta(t)=\frac{1}{s(t)}\big(I_n \otimes \mathcal{R}(t)^T\big)c(t)\in\mathbb R^{2n}.
\end{align}

We examine the derivative of each agent state $\zeta_i(t)\in\R^2$ (product rule),
{\small\begin{align*}
\dot{\zeta}_i(t)&=-\frac{\dot{s}(t)}{s^2(t)}\mathcal{R}(t)^Tc_i(t)+\frac{1}{s(t)}\big(\dot{\mathcal{R}}(t)^Tc_i(t)
     +\mathcal{R}(t)^T\dot{c_i}(t)\big).
\end{align*}}
Note that 
$\dot{\mathcal{R}}(t)^T=-\mathcal{R}(t)^T\Omega(t)$. Since $\Omega(t)$ and $\mathcal{R}(t)$ commute in $\R^2$, we have $\dot{\mathcal{R}}(t)^T=-\Omega(t)\mathcal{R}(t)^T$. Then 
{\small\begin{align*}
    \dot{\zeta}_i(t)&=-\alpha(t)\zeta_i(t)-\Omega(t)\zeta_i(t)+\frac{1}{s(t)}\big(\mathcal{R}(t)^T(\dot{u}_i(t)-v(t))\big).
\end{align*}}
By applying the control law \eqref{ctrl_2}, we have
{\small \begin{align*}
    \dot{\zeta}_i(t)=-\alpha(t)\zeta_i(t)-\Omega(t)\zeta_i(t)-\frac{1}{s(t)}\mathcal{R}(t)^Tv(t)+\frac{1}{s(t)}\mathcal{R}^T\\\big(\sum_{{ij\in\E_I}}(\tau({\gamma_{ji}})c_j(t)-c_i(t))+v(t)+\Omega(t) c_i(t) + \alpha(t)c_i(t)\big).
\end{align*}}
Since $\zeta_i(t)=\frac{1}{s(t)}\mathcal{R}(t)^Tc_i(t)$, all trajectory dependent terms cancel, simplifying the expression to 
\begin{align*}
    \dot{\zeta}_i(t)=\sum_{{ij\in\E_I}}(\tau({\gamma_{ji}})\zeta_j(t)-\zeta_i(t)),
\end{align*}
reducing the analysis of the dynamics to $\dot{\zeta}(t)=-Q\zeta(t)$. By Theorem \ref{th1}, the dynamics of $\zeta(t)$ ensure that the formation exponentially converges to the set 
$$\mathcal{F}_\zeta=\{\zeta\in\R^{2n}|\tau(\gamma)\zeta_i=\zeta_{\gamma(i)},\;\;\forall\gamma\in\Gamma_r,\;i\in \V\}.$$
From the definition of $\zeta_i(t)$ \eqref{moving_dynamics}, this set is equivalent to $\mathcal{F}_c$, rendering the set $\mathcal{F}_c$ exponentially stable as claimed.
\end{proof}

\begin{example}\label{ex:c6_ex2}
Consider the same setup as in Example \ref{ex:c6_ex1} under Assumption \ref{as_mv}. A trajectory is predefined to enable the formation to maneuver through obstacles along a desired path. The blue line in Fig. \ref{fig:c6_traj_fig} illustrates the translational trajectory along the path, and the scaled arrows indicate the rotation and scaling states with respect to the initial state.

\begin{figure}[ht]
\hspace{-0.1cm}
    \includegraphics[width=0.49\textwidth]{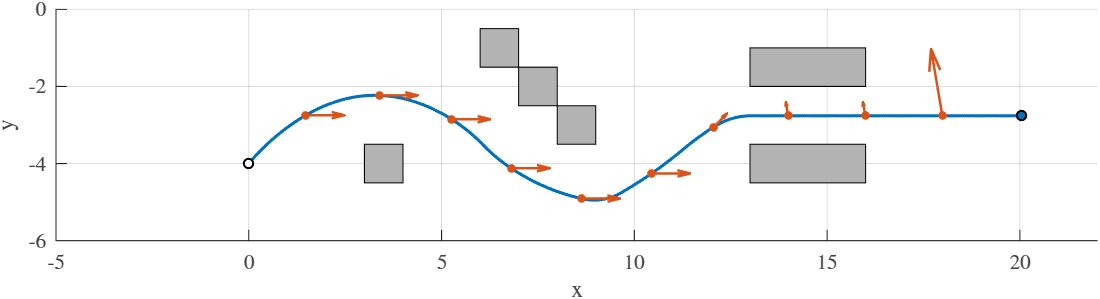}
    \caption{Predefined reference trajectory used in control law \eqref{ctrl_2}.}
    \label{fig:c6_traj_fig}
\end{figure}

Fig. \ref{fig:c6_mv1_fig} illustrates the resulting agent trajectories along the predefined virtual trajectory under control law \eqref{ctrl_2}.

\begin{figure}[ht]
\hspace{-0.15cm}
    \includegraphics[width=0.48\textwidth]{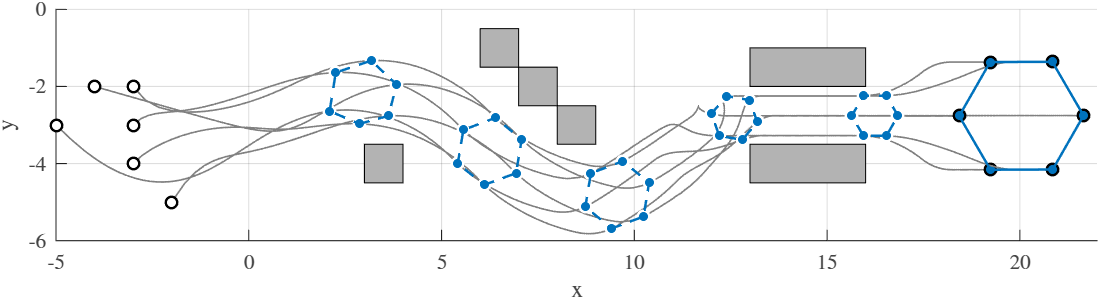}
    \caption{Agent trajectories generated by \eqref{ctrl_2} along the predefined reference trajectory.}
    \label{fig:c6_mv1_fig}
\end{figure}

To further evaluate the system, Fig. \ref{fig:c6_err_sys} illustrates the inter-agent rotation symmetry errors during maneuvering \eqref{sym_problem}. The errors exponentially converge to zero, showing the effectiveness of the proposed method.

\begin{figure}[ht]
\hspace{-0.145cm}
    \includegraphics[width=0.49\textwidth]{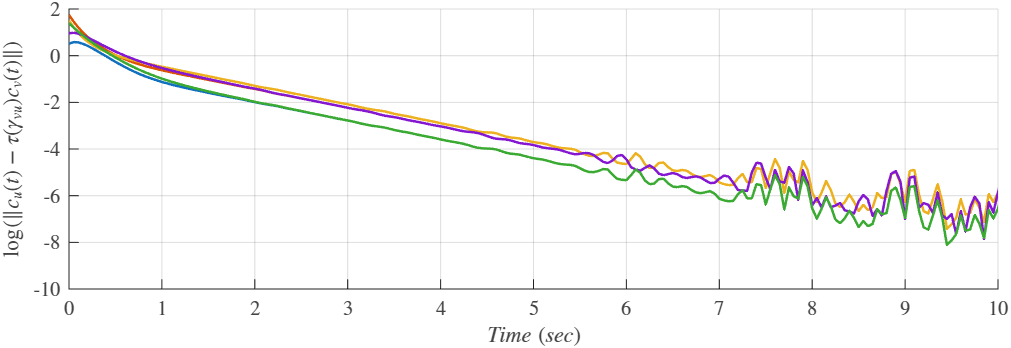}
    \caption{Norm of the symmetry error over time.}
    \label{fig:c6_err_sys}
\end{figure}

\end{example}
\section{Extension to $\R^3$}\label{sec.R3}
The results of Section \ref{sec.maneuver} are stated explicitly for formations in $\R^2$. Extending these ideas to $\R^3$ is relatively straightforward. In this section, we provide a numerical example to illustrate how this might be done, with a formal analysis reserved for future work.

Consider a MAS consisting of $n=8$ agents required to achieve a cube formation with an underlying graph $\G$ shown in Fig. \ref{fig:sim_cube_gr}(a). The target configuration can again be defined as a $\tau(\Gamma)$-symmetric framework, where the elements of $\tau(\Gamma)$ are given by rotation matrices $R\in SO(3)$. In $\R^3$, such rotations are defined about coordinate axes, or general hyperplanes through the origin. For a cube formation, a natural choice of symmetries is given by $\mathcal{C}_4$ rotations about the coordinate axes (see Fig. \ref{fig:sim_cube_gr}(b)). For instance, the agents $\{p_1,\, p_2,\, p_3,\, p_4\}$ and $\{p_5,\, p_6,\, p_7,\, p_8\}$ may each satisfy a $\mathcal{C}_4$-symmetric framework about the $z$-axis, and agents $\{p_2,\, p_1,\, p_5,\, p_6\}$ can be constrained to form a $\mathcal{C}_4$-symmetric framework orthogonal to the $z$-axis. 
These symmetry relations suffice for agents to exchange information according to the subgraph $\G_I$, obtained by removing the dashed edges as shown in Fig. \ref{fig:sim_cube_gr}. By construction, $\G_I$ is also a spanning tree. Similar to the planar case, we now define the symmetry-constraining matrix $Q_z$ defined by $\mathcal{C}_4$ symmetries about the $z$-axis, and $Q_\perp$ as the corresponding matrix defined by $\mathcal{C}_4$ symmetries about an axis orthogonal to the $z$-axis. The resulting symmetry-constraining matrix $Q$ for the cube formation is then obtained as a composition
$$
Q=I_2\otimes Q_z + P\begin{bmatrix}
    Q_\perp^T & 0
\end{bmatrix}^TP^T,
$$
where $P$ is a permutation matrix that reorders the block structure of $Q_\perp$ so that the composition of $Q_z$ and $Q_\perp$ matches the indexing of the stacked state vector in the control law.

\begin{figure}[ht]
\begin{center}
\vspace{-0.5cm}
\includegraphics[width=0.55\linewidth]{figures/cube_sub_gr.tex}
    \end{center}
\vspace{-0.2cm}
\caption{(a) Shows the underlying graph $\G$, and (b) shows the desired $\tau(\Gamma)$-symmetric framework of the cube formation.}
\label{fig:sim_cube_gr}
\end{figure}
\vspace{-0.2cm}
By construction, $Q\succeq0$, and its null-space corresponds to the set of cube-symmetric configurations. Hence the control \eqref{ctrl_1} drives the system exponentially to the desired configuration. Similar to the planar case, this method can be augmented with a virtual reference trajectory $(r(t),\rr(t),s(t))$. However, note that in $\R^3$, $\Omega(t)$ and $\rr(t)$ do not always commute. Hence, during rotations, we also require the matrix $Q$ to undergo a similarity transformation according to $\rr(t)$. Fig. \ref{fig:cube_fig} shows the resulting trajectories obtained by implementing control law \eqref{ctrl_2}. The agents converge to a cube formation along the predefined trajectory.
\begin{figure}[ht]
\vspace{-0.3cm}
\hspace{-0.1cm}\centering 
\includegraphics[width=0.48\textwidth]{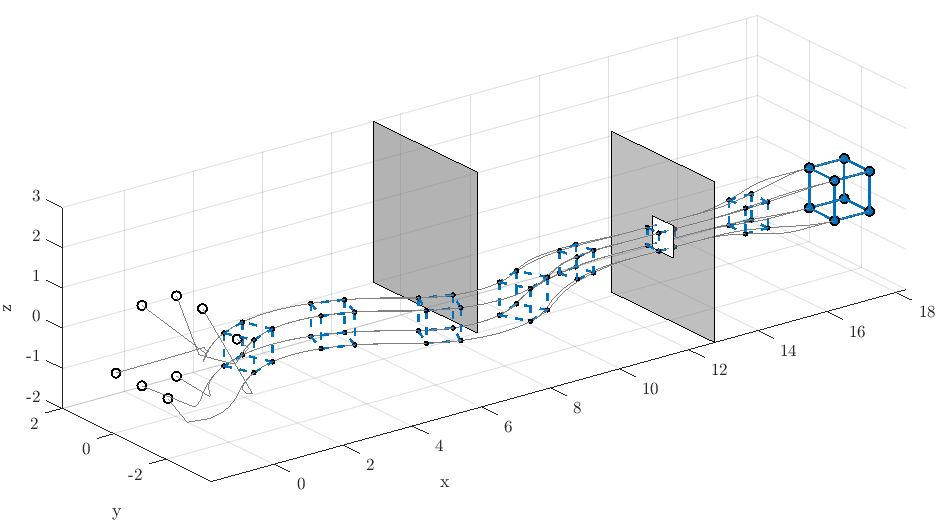}
    \caption{Cube-formation trajectories generated by \eqref{ctrl_2} along a predefined reference trajectory.}
    \label{fig:cube_fig}
\end{figure}
\section{Concluding Remarks}\label{sec.conclusion}
This paper introduced a formation control strategy that achieves cyclic target configurations using only rotation symmetry constraints between neighboring agents. In particular, we showed that a communication spanning-tree subgraph with $n-1$ constraints, matching the minimal connectivity requirement, is sufficient to implement the protocol. By augmenting the control law with a time-varying virtual state, we further demonstrated that the formation can perform coordinated translations, rotations, and scalings. We also presented a numerical extension of the method to formations in $\R^3$. Overall, the results highlight the potential of symmetry-based constraints for formation control. Future work will focus on a formal extension to broader point-group elements in $\R^3$, directed and switching interaction graphs, and leader-follower architectures, with the goal of enabling fully distributed agreement on time-varying virtual trajectories

\bibliographystyle{IEEEtran}
\bibliography{references}

\end{document}